\documentclass[sigconf,nonacm,screen]{acmart}
\usepackage{amsfonts}
\usepackage{amsmath}
\usepackage{amsthm}
\usepackage{array}
\usepackage{bm}
\usepackage{booktabs}
\usepackage{bxtexlogo}
\usepackage{caption}
\usepackage{enumitem}
\usepackage{float}
\usepackage{graphicx}
\graphicspath{{figs/}}
\usepackage[utf8]{inputenc}
\usepackage{makecell}
\usepackage{mathtools}
\usepackage{multirow}
\usepackage{natbib}
\usepackage{xcolor}
\setcitestyle{authoryear,numbers,aysep={}}
\usepackage{soul}
\usepackage{stfloats}
\usepackage{subcaption}
\usepackage[switch]{lineno}
\def\innerprod<#1>{\langle #1 \rangle}

\newtheorem{proposition}{Proposition}

\theoremstyle{definition}

\theoremstyle{remark}

\AtBeginDocument{%
  }

\begin{document}

\title{Feasible Multi-Asset Optimal Execution under Cash Constraints}

\author{Ryuji Hashimoto}
\affiliation{%
  \institution{Simudyne}
  \country{UK}
  }

\author{Namid R. Stillman}
\affiliation{%
  \institution{Simudyne}
  \country{UK}
  }
\email{namid@simudyne.com}

\renewcommand{\shortauthors}{Hashimoto et al.}
\definecolor{draftgreen}{RGB}{0,120,60}   
\newcommand{\gr}[1]{\textcolor{draftgreen}{#1}}

\begin{abstract}
Optimal execution (OE) in multi-asset settings involves complex interactions across assets, particularly through shared capital constraints during portfolio rebalancing. While existing models capture cross-impact and portfolio-level dynamics, they largely overlook the role of explicit cash constraints along the execution trajectory. As a result, the feasibility of execution strategies under limited capital remains poorly understood. In this paper, we extend the classical Almgren–Chriss framework to incorporate intertemporal constraints on expected cash consumption, requiring that the expected cumulative cash spent does not exceed a prescribed budget at every trading period. We show that the resulting multi-asset OE problem can be equivalently formulated as a quadratically constrained quadratic program (QCQP), and further establish that it admits a convex representation under mild conditions. This provides a tractable framework for analyzing execution strategies under dynamic capital constraints. Through controlled synthetic experiments, we show that the proposed cash constraints qualitatively alter OE schedules toward cash-feasible sell-first executions as the constraints become tighter. Furthermore, evaluations in an out-of-sample agent-based market simulator demonstrate that our method substantially reduces peak cash drawdown while maintaining implementation shortfall comparable to existing execution strategies. Our results highlight the importance of explicitly modeling financial feasibility in multi-asset execution and provide a foundation for bridging theoretical OE models with practical capital constraints.
\end{abstract}

\begin{CCSXML}
<ccs2012>
   <concept>
       <concept_id>10010147.10010341.10010349.10010355</concept_id>
       <concept_desc>Computing methodologies~Agent / discrete models</concept_desc>
       <concept_significance>300</concept_significance>
       </concept>
   <concept>
       <concept_id>10010147.10010257.10010258.10010261</concept_id>
       <concept_desc>Computing methodologies~Reinforcement learning</concept_desc>
       <concept_significance>500</concept_significance>
       </concept>
 </ccs2012>
\end{CCSXML}

\keywords{Multi-asset Optimal Execution, Portfolio Rebalancing, Cash Constraints}


\maketitle

\section{Introduction}
Optimal execution (OE) is a central problem in quantitative finance, aiming to execute large orders efficiently while accounting for market impact and price risk. A foundational framework is the Almgren--Chriss (AC) model~\citep{ac} which explicitly formulates the trade-off between execution cost and risk. Building on this framework, a large body of literature has extended OE models to incorporate more realistic market features~\citep{oe_order_types, oe_resiliency_lob1}. Recently, data-driven approaches have been proposed to design execution strategies directly from market data~\citep{oe_rl1,oe_rl2}. These developments have advanced the OE modeling toward more practical settings.

Another important line of research extends OE to the multi-asset setting, where an investor tries to execute a portfolio simultaneously. This problem is not a straightforward extension of the single-asset case. Prior work has emphasized asset interactions, including cross-impact~\citep{multi_asset_oe_cross_impact1} and covariance structures~\citep{multi_asset_oe_covariance1}, which give rise to qualitatively new behaviors absent in single-asset models.

However, existing studies on multi-asset OE largely overlook cash constraints, or treat them only heuristically without a tractable analytical formulation~\citep{oe_w_cash_constraint1}. By cash constraints, we refer to the requirement that cumulative cash consumption remain within available capital throughout execution, which we formulate in expectation, constraining the expected cash process at every trading period. This aspect is particularly critical in portfolio rebalancing settings that involve both buy and sell orders, where simultaneous execution across multiple assets induces interactions in cash flows. For example, if a large portion of buy orders is executed early in the trading horizon (see Figure~\ref{Fig:conceptual_diagram}), available cash may be depleted while the portfolio remains unbalanced, making the remaining execution plan both financially and risk-wise infeasible. Despite its practical importance, the conditions under which such a feasible execution trajectory exists have not been characterized.

\begin{figure*}[tbp]
  \centering
  \includegraphics[width=\linewidth]{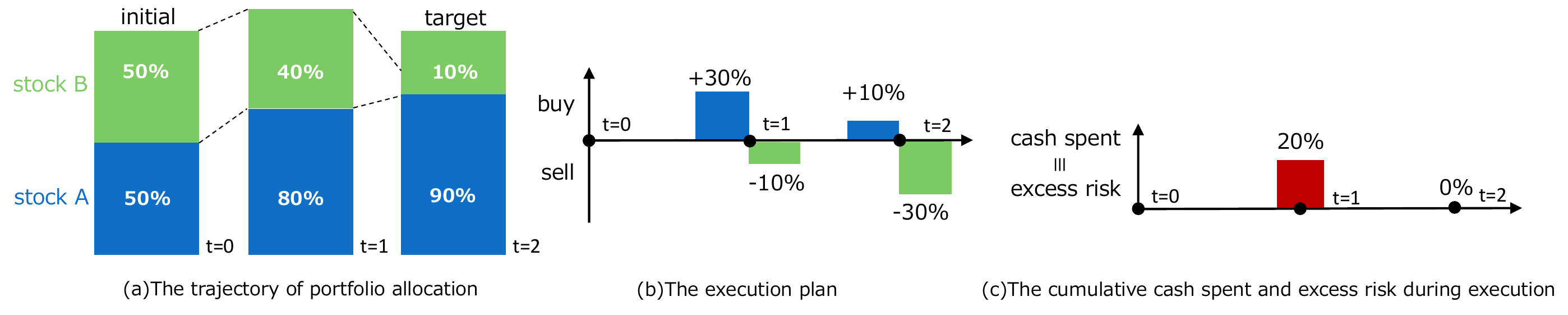}
  \caption{Conceptual illustration of the importance of cash constraints in multi-asset OE. Consider a stylized portfolio rebalancing problem with two assets over a two-period horizon as an example. Percentages denote each asset’s share of the total portfolio value at the initial time. We ignore differences in liquidity and volatility across assets. The goal is to rebalance the portfolio from an initial 50–50 allocation to a target allocation of 90\% in stock A and 10\% in stock B (a). We consider a buy-first execution strategy in which stock A is purchased aggressively in the first period (b), increasing its weight by 30\% while reducing stock B by 10\%. As a result, at the intermediate time ($t=1$), a cash outlay equivalent to 20\% of the initial portfolio value is required, which simultaneously induces an equivalent level of excess risk exposure (c).}
  \label{Fig:conceptual_diagram}
  \Description{Conceptual illustration of the importance of cash constraints in multi-asset OE.}
\end{figure*}

To address this gap, we extend the AC framework with explicit cash constraints imposed at every trading period, requiring that the expected cumulative cash spent up to each time step does not exceed a given budget. We prove that the resulting multi-asset OE problem can be equivalently formulated as a quadratically constrained quadratic program (QCQP). This reveals that, despite the path-dependent nature of cash feasibility, the problem remains within a tractable convex optimization class under mild conditions.

We validate the proposed formulation through two complementary experiments. First, in a controlled oracle setting where all market parameters are known, we demonstrate that tighter cash constraints progressively postpone buy orders while advancing sell orders, producing execution schedules that better align cash inflows with cash outflows. Second, using parameters calibrated from an agent-based market simulator, we evaluate the resulting execution strategies in out-of-sample simulations and show that they substantially reduce peak cash drawdown while maintaining implementation shortfall (IS) comparable to unconstrained OE and standard benchmark strategies. Taken together, these results establish our framework as a computationally tractable approach to safe multi-asset OE under explicit cash constraints, bridging classical OE theory and practical portfolio rebalancing under limited capital.

\section{Related Work}

OE aims to minimize the transaction costs of large orders, mainly arising from market impact and price risk. In practice, market impact is mitigated by order slicing. A seminal contribution is due to AC~\citep{ac}, which models temporary and permanent market impact linearly and formulates OE as a mean–variance optimization of IS, defined as the deviation between the pre-trade benchmark and realized execution cost, yielding an analytical solution.

While the AC framework assumes linear impact, subsequent work incorporates richer market dynamics, including order types~\citep{oe_order_types}, order flow~\citep{oe_order_flow}, dynamic impact~\citep{oe_dynamic_impact}, and liquidity resilience~\citep{oe_resiliency_lob1,oe_resiliency_lob2,oe_resiliency_lob3}. Recent studies have adopted data-driven approaches such as reinforcement learning~\citep{oe_rl1,oe_rl2,oe_rl4,oe_rl5,oe_rl6} to learn execution policies directly from data in complex market environments.

A separate line of work extends OE to multi-asset settings, where interactions across assets arise through cross-impact and portfolio-level considerations. Prior studies focus on modeling such interactions through covariance structures and stochastic resilience~\citep{multi_asset_oe_covariance1,multi_asset_oe_covariance2}, as well as cross-impact effects across assets~\citep{multi_asset_oe_cross_impact1,multi_asset_oe_cross_impact2}. While these models capture rich interdependencies in execution dynamics, they typically do not explicitly account for cash constraints over the execution horizon, despite their importance for ensuring feasibility. \citet{oe_w_cash_constraint1} is among the first to address this issue; however, their approach relies on data-driven methods, resulting in black-box policies without a tractable analytical formulation.

\section{Problem Setup}

\paragraph{\textbf{Multi-Asset Optimal Execution Settings}}
We consider a multi-asset optimal execution (OE) problem under explicit cash constraints. Let $N \in \mathbb{N}$ denote the number of assets and $T > 0$ the execution horizon. The trader aims to execute a portfolio of trades across these N assets within the finite time interval $[0, T]$. For each asset $n \in \{1,\ldots, N\}$, let $Q^{(n)} \in \mathbb{R}$ denote the total notional amount to be executed over the horizon. We adopt the convention that $Q^{(n)} > 0$ corresponds to a net purchase, while $Q^{(n)} < 0$ represents a net sale. Let $P_0^{(n)} \in \mathbb{R}_+$ denote the initial reference price of asset $n$. To model the execution process, we discretize the time interval $[0, T]$ into $K \in \mathbb{N}$ trading periods of equal length. Specifically, we define the discrete time grid as
\begin{align}
t_k \coloneqq k \tau, \quad k = 1, \ldots, K, \quad \tau = T / K.
\end{align}
At each time step $t_k$, the trader submits orders for each asset. Let $Q_k^{(n)} \in \mathbb{R}$ denote the execution amount of asset $n$ at time $t_k$.

\paragraph{\textbf{IS and Cash Spent}} We define the IS as the total execution cost relative to the initial reference prices. The IS is decomposed across assets as $IS\coloneqq\sum_{n=1}^N IS^{(n)}$, where, the IS for asset $n$ is:
\begin{align}
IS^{(n)}\coloneqq\sum_{k=1}^K \tilde{P}_k^{(n)} Q_k^{(n)} - P_0^{(n)} Q^{(n)}. \label{Eq:is}
\end{align}
Here, $\tilde{P}_k^{(n)}$ denotes the effective execution price of asset $n$ at time $t_k$, which incorporates both market impact and prevailing market conditions. Using $\tilde{P}_k^{(n)}$, the cumulative cash process $M_k\coloneqq\sum_{n=1}^NM_k^{(n)}$ is defined as the total monetary amount spent up to time $t_k$:
\begin{align}
M_k^{(n)}\coloneqq\sum_{j=1}^k\tilde{P}_j^{(n)}Q_j^{(n)}.\label{Eq:cash_spent}
\end{align}

\paragraph{\textbf{Price Process}} We assume that the asset price dynamics follow an AC-type discrete-time model. For each asset $n \in \{1, \ldots, N\}$ and each trading period $k = 1, \ldots, K$, the unaffected price $P_k^{(n)}$ and the effective execution price $\tilde{P}_k^{(n)}$ are given by
\begin{align}
\forall n 
\begin{cases}
P_k^{(n)}=P_{k-1}^{(n)}+\tau g^{(n)}\left(\frac{Q_k^{(n)}}{\tau}\right)+\sigma^{(n)}\sqrt{\tau}Z_k^{(n)},\\
\tilde{P}_k^{(n)}=P_k^{(n)}+h^{(n)}\left(\frac{Q_k^{(n)}}{\tau}\right)
\end{cases}.
\label{Eq:price_process}
\end{align}
Here, $g^{(n)}(\cdot)$ and $h^{(n)}(\cdot)$ represent the permanent and temporary market impact, respectively. $\sigma^{(n)} > 0$ is the volatility coefficient of asset $n$, and $\{Z_k^{(n)}\}_{k=1,\ldots,K,\;n=1,\ldots,N}$ are i.i.d. standard normal random variables, i.e., $Z_k^{(n)}\sim\mathcal{N}(0,1)$. In the following sections, we assume the both market impact functions are linear:
\begin{align}
g^{(n)}(x)=\gamma^{(n)}x, \qquad h^{(n)}(x)=\eta^{(n)}x,\label{Eq:linear_impact}
\end{align}
where $\gamma^{(n)},\eta^{(n)}\geq0$ denote the impact coefficients, respectively.

\paragraph{\textbf{Safe OE Problem}} We formulate the safe OE problem as a mean--variance optimization of the IS under cash constraints:
\begin{align}
\min_{\{Q_k^{(n)}\}}\mathbb{E}[IS]+\lambda Var(IS)\quad\text{s.t.}\quad\forall n~\sum_{k=1}^KQ_k^{(n)}=Q^{(n)},~~\forall k~\mathbb{E}[M_k]\leq \bar{c}_k.\label{Eq:oe}
\end{align}
Here, $\lambda \geq 0$ is a risk-aversion parameter. The operators $\mathbb{E}[\cdot], Var(\cdot)$ respectively denote expectation and variance with respect to the underlying probability space governing the price dynamics. The first constraint in Eq.~(\ref{Eq:oe}) enforces that the total executed amount for each asset matches the prescribed target $Q^{(n)}$. The second constraint imposes a dynamic cash limit, requiring that the expected cumulative cash process up to each time step $t_k$ does not exceed a given budget level $\bar{c}_k$. This formulation extends the classical AC framework by incorporating intertemporal budget constraints, which restrict not only the terminal outcome but also the path of capital deployment throughout the execution horizon.

\section{Optimal Execution as a QCQP}
\subsection{Overview of the Main Claim}
We reformulate the OE problem introduced in Eq.~(\ref{Eq:oe}) into a tractable optimization problem. In particular, our goal is to show that the problem can be expressed as a QCQP. This representation is key to analyzing the structural properties of the problem and enables the use of efficient convex optimization techniques. We further show that the resulting QCQP is convex, and hence can be solved efficiently to global optimality.

\begin{proposition}[QCQP reformulation of the safe OE problem]
Under the linear impact assumptions (Eq.~(\ref{Eq:linear_impact})), the OE problem Eq.~(\ref{Eq:oe}) can be equivalently reformulated as a QCQP. More precisely, there exist a decision vector $\bm{Q}\in\mathbb{R}^{NK}$, a symmetric matrix $\bm{H}\in\mathbb{R}^{NK\times NK}$, a matrix $\bm{C}\in\mathbb{R}^{N\times NK}$, a vector $\bm{b}\in\mathbb{R}^N$, and, for each $k=1,\ldots,K$, a vector $\bm{a}_k\in\mathbb{R}^{NK}$ and a symmetric matrix $\bar{\bm{D}}_k\in\mathbb{R}^{NK\times NK}$ such that the above OE problem is equivalent to
\begin{align}
\min_{\bm{Q}\in\mathbb{R}^{NK}} \bm{Q}^\top\bm{H}\bm{Q}
\quad\text{s.t.}\quad
\bm{C}\bm{Q}=\bm{b}, \quad \forall k~~\bm{a}_k^\top\bm{Q} + \bm{Q}^\top\bar{\bm{D}}_k \bm{Q} \leq \bar{c}_k.\label{Eq:oe_qcqp}
\end{align}
\end{proposition}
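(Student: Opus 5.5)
The plan is a direct computation. Under Eq.~(\ref{Eq:linear_impact}) every quantity in Eq.~(\ref{Eq:oe}) is a polynomial of degree at most two in the execution amounts. I will stack them asset by asset as $\bm{Q}=(\bm{Q}^{(1)\top},\ldots,\bm{Q}^{(N)\top})^\top$ with $\bm{Q}^{(n)}=(Q_1^{(n)},\ldots,Q_K^{(n)})^\top$, then read off $\bm{H}$, $\bm{C}$, $\bm{b}$, $\bm{a}_k$, $\bar{\bm{D}}_k$. Because the map $\{Q_k^{(n)}\}\mapsto\bm{Q}$ is a bijection and the objective and constraints are rewritten as identities, the two problems are equivalent: they have the same feasible sets and objective values.

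\textbf{Step 1 (unrolling the price).} Since $\tau g^{(n)}(Q_k^{(n)}/\tau)=\gamma^{(n)}Q_k^{(n)}$, iterating Eq.~(\ref{Eq:price_process}) gives
\begin{align*}
P_k^{(n)}=P_0^{(n)}+\gamma^{(n)}\sum_{j\le k}Q_j^{(n)}+\sigma^{(n)}\sqrt{\tau}\sum_{j\le k}Z_j^{(n)},
\end{align*}
and $\tilde P_k^{(n)}=P_k^{(n)}+\frac{\eta^{(n)}}{\tau}Q_k^{(n)}$.

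\textbf{Step 2 (the objective).} Substitute this into Eq.~(\ref{Eq:is}) and use $\sum_k Q_k^{(n)}=Q^{(n)}$. The $P_0^{(n)}$ terms then cancel exactly, leaving
\begin{align*}
IS^{(n)}=\gamma^{(n)}\sum_{j\le k}Q_j^{(n)}Q_k^{(n)}+\frac{\eta^{(n)}}{\tau}\sum_k (Q_k^{(n)})^2+\sigma^{(n)}\sqrt{\tau}\sum_{j}Z_j^{(n)}\sum_{k\ge j}Q_k^{(n)},
\end{align*}
where the first sum runs over pairs $j\le k$.
\begin{itemize}
\item Taking expectations kills the noise, so $\mathbb{E}[IS^{(n)}]=\bm{Q}^{(n)\top}\bm{A}^{(n)}\bm{Q}^{(n)}$. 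Here $\bm{A}^{(n)}$ is the symmetrization of $\gamma^{(n)}$ times the lower-triangular all-ones matrix plus $\frac{\eta^{(n)}}{\tau}\bm{I}_K$.
\item The $Z_j^{(n)}$ are i.i.d.\ across $j$ and $n$, so $Var(IS)=\sum_n (\sigma^{(n)})^2\tau\,\bm{Q}^{(n)\top}\bm{U}^\top\bm{U}\bm{Q}^{(n)}$, with $\bm{U}$ the upper-triangular all-ones matrix (its rows give the tails $\sum_{k\ge j}Q_k^{(n)}$).
\end{itemize}
Hence $\bm{H}=\mathrm{blockdiag}_n\bigl(\bm{A}^{(n)}+\lambda(\sigma^{(n)})^2\tau\,\bm{U}^\top\bm{U}\bigr)$, which is symmetric and carries no linear or constant term.

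\textbf{Step 3 (the constraints).} The completion constraints give $\bm{C}=\bm{I}_N\otimes\bm{1}_K^\top$ and $\bm{b}=(Q^{(1)},\ldots,Q^{(N)})^\top$.

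For the cash constraint, Eq.~(\ref{Eq:cash_spent}) together with Step 1 and the zero mean of the noise gives
\begin{align*}
\mathbb{E}[M_k]=\sum_n\Bigl[P_0^{(n)}\sum_{j\le k}Q_j^{(n)}+\gamma^{(n)}\sum_{i\le j\le k}Q_i^{(n)}Q_j^{(n)}+\frac{\eta^{(n)}}{\tau}\sum_{j\le k}(Q_j^{(n)})^2\Bigr].
\end{align*}
The linear part yields $\bm{a}_k$, whose block $n$ equals $P_0^{(n)}$ in the first $k$ coordinates and zero elsewhere. The quadratic part yields $\bar{\bm{D}}_k$ as the block-diagonal symmetrization of the leading $k\times k$ truncation of the matrices appearing in $\bm{A}^{(n)}$, padded with zeros. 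Unlike the objective, this linear term does not cancel, because the constraint involves only partial sums.

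The only delicate point is bookkeeping. Converting the ordered sums $\sum_{i\le j}$ into symmetric quadratic forms requires halving the off-diagonal entries while keeping the diagonal; equivalently, one can use $\sum_{i\le j\le k}x_ix_j=\frac12\bigl[(\sum_{j\le k}x_j)^2+\sum_{j\le k}x_j^2\bigr]$. One must also check that the cross term between the deterministic part and the noise drops out of the expectation, and that the variance correctly sums the tails $\sum_{k\ge j}Q_k^{(n)}$. Every other step is a routine substitution.
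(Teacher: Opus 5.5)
Your proposal is correct and follows the paper's own derivation. You unroll the linear-impact price, cancel the $P_0^{(n)}$ terms using the inventory constraint, and read off block-diagonal quadratic forms: your tail sums $\sum_{k\ge j}Q_k^{(n)}$ are the paper's remaining inventories $\bm{q}^{(n)}=\bm{U}_K\bm{Q}^{(n)}$, and your truncated symmetrized matrices coincide with $\bar{\bm{D}}_k^{(n)}=\frac{\gamma^{(n)}}{2}(\bm{L}_k+\bm{L}_k^\top)+\frac{\eta^{(n)}}{\tau}\bm{D}_k$; as in the paper, the objective identity holds only on the affine set $\bm{C}\bm{Q}=\bm{b}$, which is all that equivalence requires.
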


In the following subsections, we derive this QCQP representation in a constructive manner. Specifically, we express each component of the objective and constraints
in quadratic form.

\subsection{Derivation of the QCQP Reformulation}
In this subsection, we derive the QCQP reformulation in Eq.~(\ref{Eq:oe_qcqp}) through a sequence of algebraic transformations. In particular, we express the objective function, inventory constraint, and cash constraints as quadratic forms with respect to $\bm{Q}$, and explicitly construct the matrices $\bm H$, $\bm{C}$, and $\bar{\bm D}_k$ and vectors $\bm{b}$ and $\bm a_k$.

\paragraph{\textbf{Mean--Variance Objective}}
By denoting remaining orders to be executed before time $t_k$ as $q_k^{(n)}\coloneqq Q^{(n)}-\sum_{s=1}^{k-1}Q_s^{(n)}$, and rearranging the order of summation, $IS^{(n)}$ can be rewritten as follows:
\begin{align}
IS^{(n)}= \sum_{k=1}^K\left\{
\frac{\eta^{(n)}}{\tau}(Q_k^{(n)})^2+q_k^{(n)}\left(\gamma^{(n)}Q_k^{(n)}+\sigma^{(n)}Z_k^{(n)}\right)
\right\}.
\end{align}
Since $\{Z_k^{(n)}\}_{k,n}$ are i.i.d., the expectation and variance of $IS^{(n)}$ can be computed as follows:
\begin{align}
\mathbb{E}[IS^{(n)}]&= \sum_{k=1}^K\left\{
\frac{\eta^{(n)}}{\tau}(Q_k^{(n)})^2+\gamma^{(n)}q_k^{(n)}Q_k^{(n)}
\right\}.\label{Eq:e_isn}\\
Var(IS^{(n)})&= (\sigma^{(n)})^2\tau\,(\bm{Q}^{(n)})^\top\bm{U}_K^\top\bm{U}_K\bm{Q}^{(n)}.\label{Eq:var_isn}
\end{align}
Let $\bm{Q}^{(n)}\coloneqq\begin{pmatrix}Q_1^{(n)} & \ldots & Q_K^{(n)}\end{pmatrix}^\top\in\mathbb{R}^K,~ \bm{q}^{(n)}\coloneqq\begin{pmatrix}q_1^{(n)} & \ldots & q_K^{(n)}\end{pmatrix}^\top\in\mathbb{R}^K$ denote the execution schedule and remaining inventory vector for asset $n$. Using an upper triangular matrix: 
\begin{align}
\bm{U}_K\coloneqq\begin{pmatrix}
\bm{u}_1 & \ldots & \bm{u}_K
\end{pmatrix}\in\mathbb{R}^{K\times K},\quad
\forall k~ \bm{u}_k\in\mathbb{R}^K,~~(\bm{u}_k)_i\coloneqq\begin{cases}1\quad i\leq k\\ 0\quad k<i\end{cases},
\end{align}
we obtain: $\bm{q}^{(n)}=\bm{U}_K\bm{Q}^{(n)}$. Using this relation, the expectation and variance of $IS^{(n)}$ (Eqs.(\ref{Eq:e_isn},\ref{Eq:var_isn})) can be expressed in quadratic form with respect to $\bm{Q}^{(n)}$:
\begin{align}
\mathbb{E}[IS^{(n)}]&=\frac{\eta^{(n)}}{\tau}(\bm{Q}^{(n)})^\top\ \bm{Q}^{(n)}+\gamma^{(n)}(\bm{U}_K\bm{Q}^{(n)})^\top\ \bm{Q}^{(n)}\\
&= (\bm{Q}^{(n)})^\top\left(\frac{\eta^{(n)}}{\tau}\bm{E}_K+\frac{\gamma^{(n)}}{2}(\bm{U}_K+\bm{U}_K^\top)\right)\bm{Q}^{(n)}.\\
Var(IS^{(n)})&= (\sigma^{(n)})^2\tau{}^\top\bm{Q}^{(n)}{}^\top\bm{U}_K\ \bm{U}_K\bm{Q}^{(n)}.
\end{align}
Here, $\bm{E}_K\in\mathbb{R}^{K\times K}$ denotes the identity matrix. Let $\bm{Q}\in\mathbb{R}^{NK}$ denotes the overall execution plans: $\bm{Q}\coloneqq\begin{pmatrix}(\bm{Q}^{(1)})^\top & \ldots & (\bm{Q}^{(N)})^\top\end{pmatrix}^\top$. Summing over all assets, the expectation and variance of the total IS are given as follows.
\begin{align}
\mathbb{E}[IS] &= \bm{Q}^\top\bm{A}\bm{Q},\quad
Var(IS) = \bm{Q}^\top\bm{B}\bm{Q},\\
\bm{A} &\coloneqq \text{blkdiag}(\bm{A}^{(n)}),~
\bm{A}^{(n)}\coloneqq\frac{\eta^{(n)}}{\tau}\bm{E}_K+\frac{\gamma^{(n)}}{2}(\bm{U}_K+\bm{U}_K^\top),\label{Eq:A}\\
\bm{B} &\coloneqq \text{blkdiag}(\bm{B}^{(n)}),~
\bm{B}^{(n)}\coloneqq(\sigma^{(n)})^2\tau\,\bm{U}_K^\top\bm{U}_K.\label{Eq:B}
\end{align}
Here, $\text{blkdiag}(\bm{A}^{(n)})$ denotes the block diagonal matrix with blocks $\bm{A}^{(1)},\ldots,\bm{A}^{(N)}$. Combining the above results, the objective function can be written in quadratic form as
\begin{align}
\mathbb{E}[IS] + \lambda\, Var(IS)
= \bm{Q}^\top(\bm{A} + \lambda \bm{B})\bm{Q}.
\end{align}
Thus, by defining $\bm{H} \coloneqq \bm{A} + \lambda \bm{B}$, we obtain the desired quadratic representation of the objective.

\paragraph{\textbf{Inventory Constraint}} The inventory constraints admit a simple linear representation. Let $\bm{C}\in\mathbb{R}^{N\times NK}$ be the matrix whose $n$-th row extracts the total executed amount of asset $n$, i.e.,
\begin{align}
\bm{C} \coloneqq \text{blkdiag}\big(\bm{1}_K^\top\big),
\end{align}
where $\bm{1}_K^\top \in \mathbb{R}^{1\times K}$ denotes the row vector of ones. Let $\bm{b}\in\mathbb{R}^N$ be the vector of target execution amounts, $\bm{b} \coloneqq \begin{pmatrix} Q^{(1)} & \cdots & Q^{(N)} \end{pmatrix}^\top$. Then, the inventory constraints can be written as $\bm{C}\bm{Q} = \bm{b}$.

\paragraph{\textbf{Cash Constraint}}
Using Eq.~(\ref{Eq:price_process}), the cumulative cash spent $M_k^{(n)}$ (Eq.~(\ref{Eq:cash_spent})) can be expressed explicitly as follows:
\begin{align}
M_k^{(n)}&=P_0^{(n)}\sum_{j=1}^kQ_j^{(n)}+\gamma^{(n)}\sum_{j=1}^kQ_j^{(n)}\sum_{l=1}^jQ_l^{(n)}\nonumber\\
&~ +\frac{\eta^{(n)}}{\tau}\sum_{j=1}^k\left(Q_j^{(n)}\right)^2+\sigma^{(n)}\sqrt{\tau}\sum_{j=1}^kQ_j^{(n)}\sum_{l=1}^jZ_l^{(n)}.\label{Eq:cash_spent_explicit}
\end{align}
For Eq.~(\ref{Eq:cash_spent_explicit}), the first term can be written in inner product form:
\begin{align}
\sum_{j=1}^kQ_j^{(n)}=\bm{u}_k^\top\ \bm{Q}^{(n)}.\label{Eq:Q_till_k}
\end{align}
The second term in Eq.~(\ref{Eq:cash_spent_explicit}) can be written in quadratic form by introducing the matrix $\bm{L}_k\in\mathbb{R}^{K\times K}$ defined by
\begin{align}
(\bm{L}_k)_{i,j}\coloneqq
\begin{cases}
1 & 1\leq j\leq i\leq k,\\
0 & \text{otherwise}.
\end{cases}
\end{align}
Then,
\begin{align}
\sum_{j=1}^k Q_j^{(n)} \sum_{l=1}^j Q_l^{(n)}=
(\bm{Q}^{(n)})^\top\ \bm{L}_k \bm{Q}^{(n)}=
(\bm{Q}^{(n)})^\top\ \frac{\bm{L}_k+\bm{L}_k^\top}{2}\bm{Q}^{(n)}.\label{Eq:Qj_till_k}
\end{align}
The third term in Eq.~(\ref{Eq:cash_spent_explicit}) can also be written in quadratic form by introducing the matrix $\bm{D}_k\coloneqq\text{diag}(\bm{u}_k)\in\mathbb{R}^{K\times K}$:
\begin{align}
\sum_{j=1}^k\left(Q_j^{(n)}\right)^2=(\bm{Q}^{(n)})^\top\ \bm{D}_k \bm{Q}^{(n)},\label{Eq:QQ_till_k}
\end{align}
where $\text{diag}(\bm{u}_k)$ denotes the diagonal matrix whose diagonal entries are given by the components of $\bm{u}_k$. Using Eqs.~(\ref{Eq:Q_till_k},\ref{Eq:Qj_till_k},\ref{Eq:QQ_till_k}), the expected $M_k^{(n)}$ can be expressed as follows.
\begin{align}
\mathbb{E}[M_k^{(n)}]=P_0^{(n)}\bm{u}_k^\top\bm{Q}^{(n)}+(\bm{Q}^{(n)})^\top
\left(\frac{\gamma^{(n)}}{2}(\bm{L}_k+\bm{L}_k^\top)+\frac{\eta^{(n)}}{\tau}\bm{D}_k\right)
\bm{Q}^{(n)}
\end{align}
Therefore, by defining the vector $\bm{a}_k\in\mathbb{R}^{NK}$ and the matrix $\bar{\bm{D}}_k\in\mathbb{R}^{NK\times NK}$ as follows, we obtain $\forall k~~\mathbb{E}[M_k]= \bm{a}_k^\top\bm{Q} + \bm{Q}^\top\bar{\bm{D}}_k \bm{Q}$.
\begin{align}
\bm{a}_k&\coloneqq \begin{pmatrix}
P_0^{(1)}\bm{u}_k^\top & \ldots & P_0^{(N)}\bm{u}_k^\top
\end{pmatrix}^\top,\label{Eq:a}\\
\bar{\bm{D}}_k&\coloneqq \text{blkdiag}(\bar{\bm{D}}_k^{(n)}),~~\bar{\bm{D}}_k^{(n)}\coloneqq\left(
\frac{\gamma^{(n)}}{2}(\bm{L}_k+\bm{L}_k^\top)+\frac{\eta^{(n)}}{\tau}\bm{D}_k
\right)\label{Eq:D}
\end{align}

\begin{proposition}[Convexity of the QCQP]
Under $\lambda\geq0$, $\gamma^{(n)}\geq0$, and $\eta^{(n)}\geq0$ for all $n$, 
the QCQP formulation in Eq.~(\ref{Eq:oe_qcqp}) is convex.
\end{proposition}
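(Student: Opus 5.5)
The plan is to use the standard characterization of a convex QCQP. A problem of the form Eq.~(\ref{Eq:oe_qcqp}) is convex if the objective matrix $\bm{H}$ and every constraint matrix $\bar{\bm{D}}_k$ are positive semidefinite. The equality constraint $\bm{C}\bm{Q}=\bm{b}$ is affine, so it defines a convex set. If $\bar{\bm{D}}_k\succeq 0$, then $\bm{Q}\mapsto \bm{a}_k^\top\bm{Q}+\bm{Q}^\top\bar{\bm{D}}_k\bm{Q}$ is convex, so each sublevel set $\{\bm{Q}:\bm{a}_k^\top\bm{Q}+\bm{Q}^\top\bar{\bm{D}}_k\bm{Q}\le\bar{c}_k\}$ is convex. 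The feasible region is then an intersection of convex sets, and $\bm{H}\succeq0$ makes the objective convex. Since $\bm{H}$, $\bm{A}$, $\bm{B}$ and $\bar{\bm{D}}_k$ are all block diagonal with blocks indexed by $n$, it suffices to check positive semidefiniteness of each $K\times K$ block $\bm{A}^{(n)}$, $\bm{B}^{(n)}$ and $\bar{\bm{D}}_k^{(n)}$ from Eqs.~(\ref{Eq:A}), (\ref{Eq:B}) and (\ref{Eq:D}).

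For the objective, $\bm{B}^{(n)}=(\sigma^{(n)})^2\tau\,\bm{U}_K^\top\bm{U}_K$ is a Gram matrix with a nonnegative multiplier, hence PSD. For $\bm{A}^{(n)}$, the key observation is that $\bm{U}_K$ has entries $(\bm{U}_K)_{i,k}=1$ for $i\le k$ and $0$ otherwise. Therefore $\bm{U}_K+\bm{U}_K^\top$ has $2$ on the diagonal and $1$ everywhere off the diagonal, so
\begin{align*}
\bm{U}_K+\bm{U}_K^\top=\bm{E}_K+\bm{1}_K\bm{1}_K^\top .
\end{align*}
Equivalently, $x^\top\bm{U}_K x=\sum_{i\le k}x_ix_k=\tfrac12\big[(\sum_i x_i)^2+\sum_i x_i^2\big]\ge0$. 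Hence $\bm{A}^{(n)}=\frac{\eta^{(n)}}{\tau}\bm{E}_K+\frac{\gamma^{(n)}}{2}(\bm{E}_K+\bm{1}_K\bm{1}_K^\top)$ is a nonnegative combination of PSD matrices, using $\eta^{(n)},\gamma^{(n)}\ge0$. Combined with $\lambda\ge0$, this gives $\bm{H}^{(n)}=\bm{A}^{(n)}+\lambda\bm{B}^{(n)}\succeq0$.

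For the cash constraints I will apply the same identity, restricted to the first $k$ coordinates. $\bm{L}_k$ is the all-ones lower-triangular matrix supported on the leading $k\times k$ block. Therefore $\bm{L}_k+\bm{L}_k^\top$ equals $\bm{E}_k+\bm{1}_k\bm{1}_k^\top$ padded with zeros, and concretely
\begin{align*}
x^\top(\bm{L}_k+\bm{L}_k^\top)x=\Big(\sum_{j\le k}x_j\Big)^2+\sum_{j\le k}x_j^2\ \ge 0 .
\end{align*}
Also, $\bm{D}_k=\mathrm{diag}(\bm{u}_k)$ is diagonal with $0/1$ entries, hence PSD. So $\bar{\bm{D}}_k^{(n)}=\frac{\gamma^{(n)}}{2}(\bm{L}_k+\bm{L}_k^\top)+\frac{\eta^{(n)}}{\tau}\bm{D}_k\succeq0$, and the block-diagonal $\bar{\bm{D}}_k$ is PSD for each $k$. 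This completes the argument.

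The only nontrivial step is identifying the symmetrized triangular matrices $\bm{U}_K+\bm{U}_K^\top$ and $\bm{L}_k+\bm{L}_k^\top$ as (zero-padded) identity-plus-rank-one matrices. A naive eigenvalue bound on triangular matrices would not settle their definiteness. The sum-of-squares identity $\sum_{j}x_j\sum_{l\le j}x_l=\tfrac12[(\sum_j x_j)^2+\sum_j x_j^2]$ resolves this directly.
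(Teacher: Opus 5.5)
Your proposal is correct and follows essentially the same route as the paper. Both show that each block of $\bm{H}$ and $\bar{\bm{D}}_k$ is positive semidefinite, using the Gram structure of $\bm{B}^{(n)}$, the diagonal $0/1$ form of $\bm{D}_k$, and the sum-of-squares identity for the symmetrized triangular matrices $\bm{U}_K$ and $\bm{L}_k$; your identification of these as (zero-padded) $\bm{E}+\bm{1}\bm{1}^\top$ is just the matrix form of that same identity.
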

\begin{proof}
The objective function is quadratic with Hessian $\bm{H}=\bm{A}+\lambda\bm{B}$. We first show that $\bm{H}\succeq 0$. For each asset $n$, the matrix $\bm{B}^{(n)}$ is given by Eq.~(\ref{Eq:B}). Since $\bm{U}_K^\top\ \bm{U}_K \succeq 0$ and $(\sigma^{(n)})^2 \tau \geq 0$, it follows that $\forall n~\bm{B}^{(n)} \succeq 0$, and thus $\bm{B} \succeq 0$. Next, consider $\bm{A}^{(n)}$ defined in Eq~(\ref{Eq:A}). Since $\bm{E}_K \succeq 0$ and $\eta^{(n)} \ge 0$, the first term in Eq~(\ref{Eq:A}) is positive semidefinite. 
Moreover, for any $\bm{x} \in \mathbb{R}^K$,
\begin{align}
\bm{x}^\top\ \bm{U}_K \bm{x}
= \sum_{1 \le i \le j \le K} x_i x_j
= \frac{1}{2} \left( \left( \sum_{i=1}^K x_i \right)^2 + \sum_{i=1}^K x_i^2 \right) \ge 0,
\end{align}
which implies $(\bm{U}_K + \bm{U}_K^\top)/2 \succeq 0$. 
Since $\gamma^{(n)} \ge 0$, we obtain $\forall n~\bm{A}^{(n)} \succeq 0$, and hence $\bm{A} \succeq 0$. Combining the above and using $\lambda \geq 0$, we obtain $\bm{H} = \bm{A} + \lambda \bm{B} \succeq 0$, so the objective function is convex.

Each equality constraint is affine, and therefore defines a convex feasible set. Each inequality constraint is of the form $\bm{a}_k^\top \bm{Q} + \bm{Q}^\top \bar{\bm{D}}_k \bm{Q} \leq \bar{c}_k$. For each $k$ and $n$, the matrix $\bar{\bm{D}}_k^{(n)}$ is given in Eq.~(\ref{Eq:D}). Since $\bm{D}_k = \text{diag}(\bm{u}_k) \succeq 0$ and $\eta^{(n)} \geq 0$, the second term of $\bar{\bm{D}}_k^{(n)}$ is positive semidefinite. Similarly, for any $\bm{x} \in \mathbb{R}^K$,
\begin{align}
\bm{x}^\top \bm{L}_k \bm{x}
= \sum_{1 \le l \le j \le k} x_j x_l
= \frac{1}{2} \left( \left( \sum_{j=1}^k x_j \right)^2 + \sum_{j=1}^k x_j^2 \right) \ge 0,
\end{align}
which implies $(\bm{L}_k + \bm{L}_k^\top)/2 \succeq 0$. 
Since $\gamma^{(n)} \ge 0$, we obtain $\bar{\bm{D}}_k^{(n)} \succeq 0$ for all $n$, and hence $\bar{\bm{D}}_k \succeq 0$.

Therefore, all inequality constraints are convex. This proves that the problem is a convex QCQP.
\end{proof}

\section{Experimental Setup}\label{Sec:experiment}
We evaluate the proposed framework through two complementary experiments that separately investigate its underlying mechanism and practical effectiveness.

\begin{itemize}[
    leftmargin=1.em,
    itemsep=0.4ex,
    topsep=0.6ex,
    parsep=0pt,
    partopsep=0pt
]
    \item \textbf{RQ1 (Mechanism Validation):} How do explicit cash constraints reshape the OE schedule?
    \item \textbf{RQ2 (Performance Evaluation):} Can the proposed formulation improve cash feasibility without sacrificing execution efficiency under realistic market conditions?
\end{itemize}
The first experiment addresses RQ1 using an oracle setting in which all market parameters are assumed to be known. This enables us to isolate the qualitative effect of the proposed cash constraints on the OE schedule. The second experiment addresses RQ2 by evaluating the optimized execution strategies in an out-of-sample agent-based market simulator with calibrated market parameters, thereby assessing their practical effectiveness.

\subsection{Experiment 1: Mechanism Validation} To answer \textbf{RQ1}, we conduct a controlled experiment to isolate the qualitative effect of the proposed cash constraints on the OE schedule. Since our objective is to understand the qualitative behavior induced by the proposed QCQP formulation itself, we assume an oracle setting in which all market parameters are known exactly. We consider a stylized portfolio rebalancing task involving two assets, where one asset is purchased and the other is sold simultaneously. The cash constraint is parameterized as $\bar{c}_k=\bar{c}_0+\beta k$, where $\beta$ controls the rate at which the available budget is relaxed over the execution horizon. By varying $\beta$, we investigate how different levels of cash availability influence the resulting OE schedules. The complete experimental settings are summarized in Table~\ref{Tab:exp1_setting}.

\begin{table}[t]
\centering
\caption{Simulation settings for Experiment 1.}
\label{Tab:exp1_setting}
\begin{tabular}{lll}
\toprule
\textbf{Param.} & \textbf{Description} & \textbf{Value} \\
\midrule
$N$ & \# of assets & $2$ \\
$Q^{(1)}$ & Buy-side order size & $20$ \\
$Q^{(2)}$ & Sell-side order size & $-20$ \\
$P_0^{(n)}$ & Initial asset price & $50.0,~ n=1,2$ \\
$\gamma^{(n)}$ & Permanent impact coefficient & $0.02,~ n=1,2$ \\
$\eta^{(n)}$ & Temporary impact coefficient & $0.05,~ n=1,2$ \\
$\sigma^{(n)}$ & Volatility & $0.10,~ n=1,2$ \\
$K$ & \# of execution periods & $20$ \\
$\tau$ & Time interval & $1$ \\
$\lambda$ & Risk-aversion parameter & $0.30$ \\
$\bar{c}_0$ & Initial cash constraint & $5.0$ \\
$\beta$ & Cash constraint relaxation rate & $0.1,\ 0.2,\ 0.3,\ 0.5$ \\
$-$ & Optimization solver & SCS \\
\bottomrule
\end{tabular}
\end{table}

\subsection{Experiment 2: Performance Evaluation}

To answer \textbf{RQ2}, we evaluate the proposed method in a realistic market environment using the \textit{Simudyne Pulse} agent-based market simulator~\citep{simudyne2025pulse}. Unlike Experiment~1, where all market parameters are assumed to be known, this experiment considers the practical setting in which the market impact and volatility parameters must be estimated from historical data. We therefore construct execution schedules using estimated market parameters and evaluate their performance under out-of-sample market conditions.

\paragraph{\textbf{Evaluation Protocol}}

To ensure a rigorous out-of-sample evaluation, parameter estimation and performance evaluation are conducted on disjoint trading days. Market parameters are calibrated using simulation data generated on the calibration date, while execution performance is evaluated on an independent validation date. The calibrated parameters are fixed throughout the evaluation and are never updated using the validation data. 

To evaluate the resulting strategies under unseen market conditions, we calibrate a separate Simudyne Pulse market simulator to the validation date (2025-10-14) and generate independent Monte Carlo price trajectories. The execution-model parameters are therefore estimated exclusively from the calibration market, while the execution strategies are evaluated in a distinct market environment calibrated to the validation date, ensuring a strict out-of-sample evaluation protocol.

\paragraph{\textbf{Parameter Estimation}}
For each asset $n$, the arrival price $P_0^{(n)}$, permanent market impact coefficient $\gamma^{(n)}$, temporary market impact coefficient $\eta^{(n)}$, and volatility $\sigma^{(n)}$ are estimated from the calibration data generated by the market simulator. To estimate the market impact parameters, we perform a series of TWAP executions with different target execution sizes and record the resulting IS. Let $V_i^{(n)}$ denote the total executed quantity in calibration run $i$, and let $IS_i^{(n)}$ denote the corresponding realized IS. Under the linear impact assumption, the expected IS satisfies

\begin{align}
\frac{IS_i^{(n)}}{V_i^{(n)}}=\eta^{(n)}+\frac{\gamma^{(n)}}{2}V_i^{(n)}+
\varepsilon_i,
\end{align}
where $\varepsilon_i$ denotes the residual error. The coefficients $\gamma^{(n)}$ and $\eta^{(n)}$ are estimated by ordinary least squares (OLS). The volatility parameter $\sigma^{(n)}$ is estimated from the variance of one-period mid-price returns obtained from baseline simulations without algorithmic execution. The arrival price $P_0^{(n)}$ is defined as the observed market price at the beginning of the validation period.

\paragraph{\textbf{Evaluation Settings}} We compare the proposed cash-constrained optimization (\textbf{Ours}) with three benchmark strategies: time-weighted average price (TWAP), volume-weighted average price (VWAP), and an ablated variant without the proposed cash constraints (\textbf{AC free}), which corresponds to Eq.~(\ref{Eq:oe}) with the cash constraints removed. AC free serves as an ablation of the proposed formulation, allowing us to isolate the contribution of the proposed cash constraints while keeping the optimization objective and all other settings identical. The remaining experimental settings are summarized in Table~\ref{Tab:exp2_setting}.

\paragraph{\textbf{Evaluation Metrics}} We report two evaluation metrics: the \emph{peak cash drawdown}, defined as the maximum cumulative cash deficit during execution normalized by the target buy notional, which measures cash feasibility throughout the execution process, and \emph{combined IS}, computed over the buy--sell portfolio, which evaluates the overall execution efficiency.

\begin{figure}[t]
    \centering
    \includegraphics[width=0.9\linewidth]{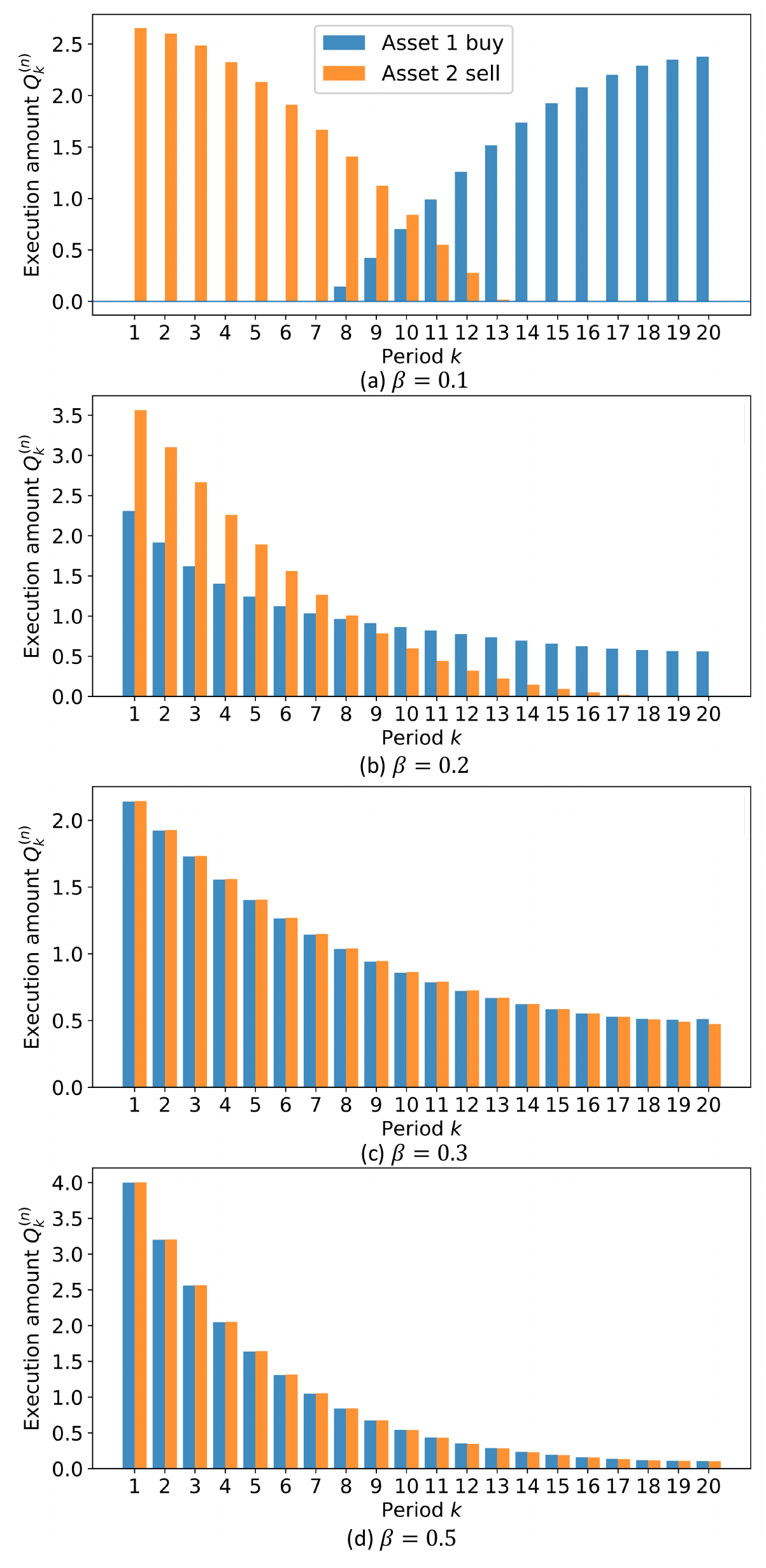}
    \caption{
    OE schedules under different cash constraint levels. The blue and orange bars represent the execution amounts of the buy and sell assets, respectively, at each execution period. The cash constraint is defined as $\bar{c}_k=\bar{c}_0+\beta k$, where smaller values of $\beta$ correspond to stricter cash constraints.
    }
    \label{Fig:exp1_result}
  \Description{OE schedules under different cash constraint levels.}
\end{figure}

\begin{table}[t]
\centering
\caption{Simulation settings for Experiment 2.}
\label{Tab:exp2_setting}
\small
\setlength{\tabcolsep}{3pt}
\begin{tabular}{lll}
\toprule
\textbf{Param.} & \textbf{Description} & \textbf{Value} \\
\midrule
$N$ & \# of assets & 2\\
$(Q^{(1)},Q^{(2)})$ & Order sizes ($\times10^{6}$) & \begin{tabular}[t]{@{}l@{}}
(0.5,-0.5), (0.75,-0.75),\\
(1.0,-1.0), (1.5,-1.5),\\
and (2.0,-2.0)
\end{tabular}\\
$T$ & Execution horizon & 7,200 sec.\\
$K$ & Number of execution periods & 20 \\
$\tau$ & Time interval & 360 sec. \\
$\lambda$ & Risk-aversion parameter & 0.50 \\
$\bar{c}_k$ & Cash budget for Ours & 0 for all $k$ \\
- & Simulator & Simudyne Pulse v2.0 \\
- & Calibration date & 2025-09-15 \\
- & Validation date & 2025-10-14 \\
- & Assets & \begin{tabular}[t]{@{}l@{}}
1024.HK, 9888.HK,\\
9999.HK
\end{tabular}\\
- & \# of buy--sell pairs & $3\times 2=6$\\
- & \# of validation runs & 300\\
Solver & Optimization solver & SCS \\
$V_i^{(n)}$ &\begin{tabular}[t]{@{}l@{}}
Target volume of TWAP simulations\\
for the parameter estimations 
\end{tabular} & \begin{tabular}[t]{@{}l@{}}1,000, 5,000, 10,000,\\
and 20,000 shares\end{tabular}\\
- & \# of calibration runs & $3$ for all $V_i^{(n)}$\\
\bottomrule
\end{tabular}
\end{table}

\section{Results}


\subsection{RQ1: Effect of Cash Constraints on OE}
We first investigate how the proposed cash constraints influence the structure of the OE schedule. Figure~\ref{Fig:exp1_result} illustrates the OE schedules obtained under different levels of cash constraints. When the cash constraint is relatively loose ($\beta=0.5$), the proposed method produces nearly symmetric buy and sell schedules that closely resemble the unconstrained AC solution. As the cash constraint becomes tighter, however, the execution schedules become increasingly asymmetric. In particular, buy orders are progressively postponed, while sell orders are executed earlier in the trading horizon. Under the strictest constraint ($\beta=0.1$), most sell orders are completed before substantial buy orders are initiated. This behavior arises because the optimization explicitly enforces the cash constraint at every trading period. Executing sell orders earlier generates cash inflows that, in expectation, finance subsequent buy orders, allowing the optimization to keep the expected cash process within budget throughout the execution horizon. Consequently, the proposed formulation automatically coordinates the timing of buy and sell executions according to the available cash budget, rather than simply optimizing each asset independently. These observations answer \textbf{RQ1}, demonstrating that explicit cash constraints fundamentally reshape the structure of the OE schedule by coordinating cash inflows and outflows across multiple assets.

\subsection{RQ2: Performance in Realistic Markets}
Table~\ref{Tab:estimated_parameters} summarizes the execution-model parameters estimated from the market simulator on the calibration date (2025-09-15). These parameters are subsequently fixed when constructing the OE schedules. As shown in Table~\ref{Tab:estimated_parameters}, the estimated market impact and volatility parameters vary substantially across assets, representing different liquidity conditions. In particular, 9999.HK exhibits the largest estimated market impact coefficient, indicating a relatively illiquid market in which trades exert a stronger influence on prices. Figure~\ref{Fig:simulated_prices} illustrates the price trajectories generated by calibrated Simudyne Pulse market simulator on the validation date (2025-10-14). It illustrates that the market simulator generates diverse price trajectories for each asset. Together, these results indicate that the proposed method is evaluated across heterogeneous market conditions with varying levels of liquidity and price volatility. 
\begin{figure}[tbp]
    \centering
    \includegraphics[width=0.9\linewidth]{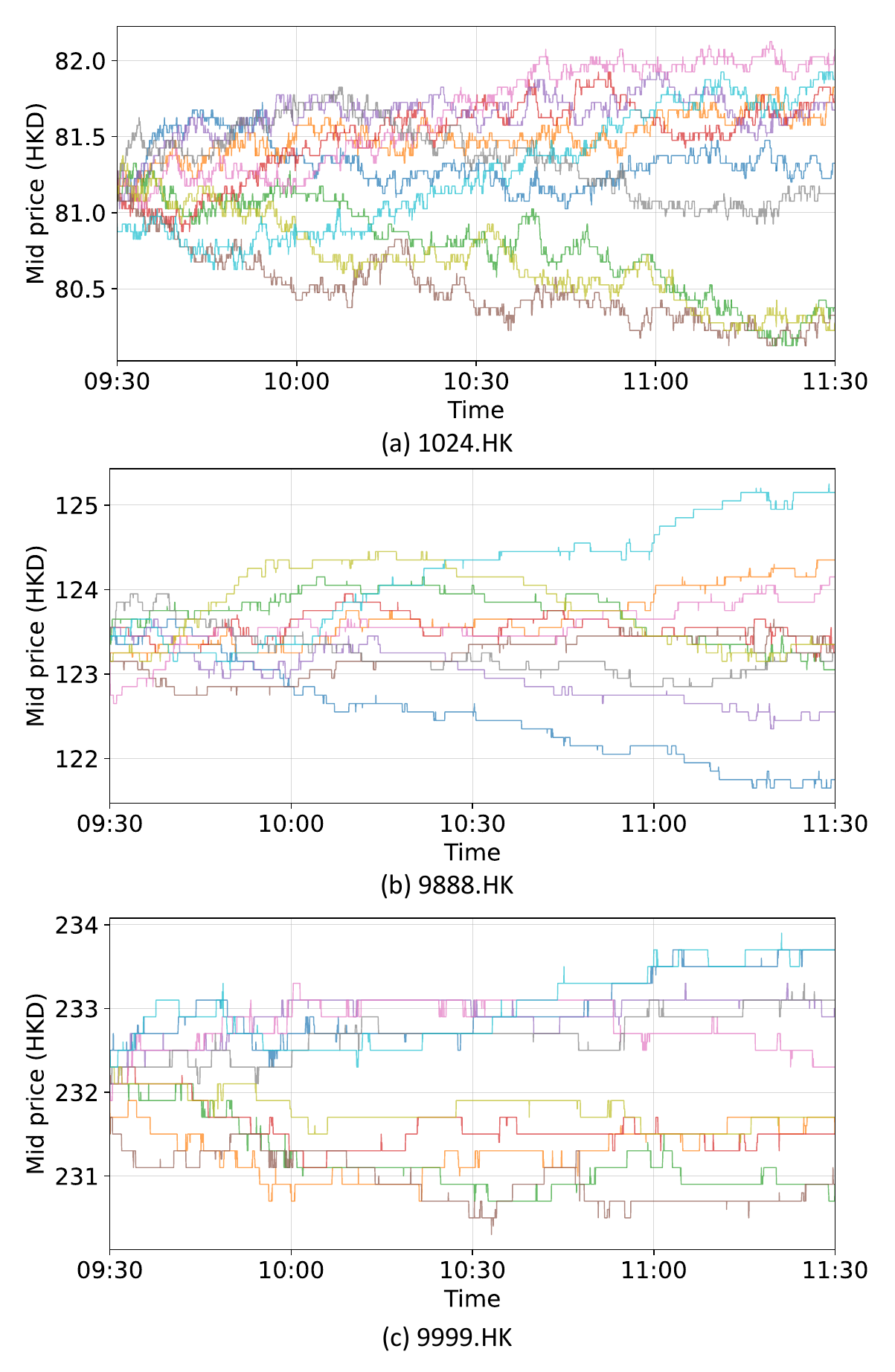}
    \caption{
    Representative simulated mid-price trajectories generated by the calibrated Simudyne Pulse market simulator on the validation date (2025-10-14). Each line represents one Monte Carlo realization of the same market environment. These simulated price paths are used exclusively for out-of-sample evaluation of execution strategies.
    }
    \label{Fig:simulated_prices}
    \Description{Representative simulated mid-price trajectories generated by the calibrated Simudyne Pulse market simulator on the validation date (2025-10-14).}
  \end{figure}

\begin{table}[t]
\centering
\caption{Market parameters estimated from the calibration simulation on 2025-09-15. The arrival price $P_0^{(n)}$, permanent impact coefficient $\gamma^{(n)}$, temporary impact coefficient $\eta^{(n)}$, and volatility $\sigma^{(n)}$ are estimated independently for each asset and subsequently used to construct the execution schedules evaluated on the validation date.}
\label{Tab:estimated_parameters}
\begin{tabular}{lcccc}
\toprule
\textbf{Symbol} &
$P_0^{(n)}$ &
$\gamma^{(n)}$ &
$\eta^{(n)}$ &
$\sigma^{(n)}$ \\
\midrule
1024.HK & 81.275  & $1.523\times10^{-9}$  & $6.524\times10^{-4}$ & $1.666\times10^{-3}$ \\
9888.HK & 123.350 & $8.650\times10^{-11}$ & $0.000$              & $2.529\times10^{-3}$ \\
9999.HK & 232.500 & $1.899\times10^{-8}$  & $1.038\times10^{-2}$ & $1.622\times10^{-3}$ \\
\bottomrule
\end{tabular}
\end{table}

\begin{table*}[t]
\centering
\caption{
Performance comparison of execution strategies under realistic market simulations ($Q^{(1)}=0.5\times10^{6},Q^{(2)}=-0.5\times10^6$). Results are reported as mean with standard deviation in parentheses over 300 independent simulations. Peak Cash Drawdown measures the largest cumulative cash shortfall during execution, normalized by the buy notional. Combined IS denotes the IS of the buy--sell pair. Lower values indicate better performance for both metrics.
}
\label{Tab:performance}
\setlength{\tabcolsep}{4pt}
\renewcommand{\arraystretch}{1.15}
\begin{tabular}{lcccccccc}
\toprule
&
\multicolumn{4}{c}{\textbf{Peak Cash Drawdown (\%)}} &
\multicolumn{4}{c}{\textbf{Combined IS (bps)}} \\
\cmidrule(lr){2-5}
\cmidrule(l){6-9}
\textbf{Pair (Buy/Sell)}
& \textbf{TWAP}
& \textbf{VWAP}
& \textbf{AC free}
& \textbf{Ours}
& \textbf{TWAP}
& \textbf{VWAP}
& \textbf{AC free}
& \textbf{Ours} \\
\midrule
\small{1024.HK / 9888.HK}
& \textbf{0.00 (0.00)}
& \textbf{0.00 (0.00)}
& 59.73 (2.53)
& 1.20 (1.91)
& -13.08 (57.61)
& \textbf{-13.23 (46.88)}
& -13.06 (42.15)
& -13.20 (57.34) \\

\small{1024.HK / 9999.HK}
& \textbf{0.00 (0.00)}
& \textbf{0.00 (0.00)}
& 88.45 (4.15)
& 0.67 (8.15)
& 17.39 (79.75)
& 20.41 (68.23)
& 8.52 (70.29)
& \textbf{8.23 (80.28)} \\

\small{9888.HK / 1024.HK}
& 10.16 (0.01)
& 10.99 (0.07)
& 60.27 (2.36)
& \textbf{2.01 (2.22)}
& \textbf{9.26 (58.34)}
& 10.61 (47.54)
& 12.32 (48.13)
& 10.13 (56.69) \\

\small{9888.HK / 9999.HK}
& \textbf{0.00 (0.00)}
& 1.11 (0.07)
& 57.06 (2.53)
& 0.03 (0.27)
& 17.72 (77.74)
& 27.71 (62.10)
& 13.90 (70.93)
& \textbf{11.77 (77.02)} \\

\small{9999.HK / 1024.HK}
& 16.67 (1.03)
& 19.52 (1.00)
& 70.03 (3.52)
& \textbf{11.81 (4.06)}
& -20.39 (78.03)
& \textbf{-27.47 (66.01)}
& -8.44 (53.28)
& -11.34 (78.23) \\

\small{9999.HK / 9888.HK}
& 10.51 (1.03)
& 10.80 (1.08)
& 69.43 (3.30)
& \textbf{9.98 (3.99)}
& -41.98 (75.89)
& \textbf{-46.92 (60.85)}
& -30.84 (50.82)
& -33.74 (75.58) \\
\bottomrule
\end{tabular}
\end{table*}

Table~\ref{Tab:performance} compares all execution strategies at the benchmark target execution volume, $Q^{(1)}=0.5\times10^{6}$ and $Q^{(2)}=-0.5\times10^{6}$. Across all six ordered asset pairs, our method substantially reduces peak cash drawdown relative to AC free. While AC free incurs drawdowns ranging from 57.06\% to 88.45\%, the corresponding values for our method remain between 0.03\% and 11.81\%. The reduction is particularly pronounced for the pairs involving 9999.HK, for which AC free requires the largest intermediate cash balances. This observation is consistent with the relatively large estimated market-impact coefficient and low liquidity of 9999.HK reported in Table~\ref{Tab:estimated_parameters}. These results show that incorporating the cash constraint directly into the optimization problem effectively limits intermediate funding requirements during execution.

Importantly, the improvement in cash feasibility does not come at the expense of execution quality. At the benchmark volume, our method achieves a lower mean combined IS than AC free for every evaluated asset pair. The differences are relatively small for some pairs, but they consistently favor our method despite the substantial reduction in cash drawdown. TWAP and VWAP attain lower combined IS for some asset pairs, whereas our method performs best for others. Thus, our method does not uniformly dominate the rule-based baselines in terms of IS alone; rather, it provides a substantially improved balance between execution cost and cash feasibility through an explicit optimization framework.

\begin{figure}[t]
    \centering
    \includegraphics[width=\linewidth]{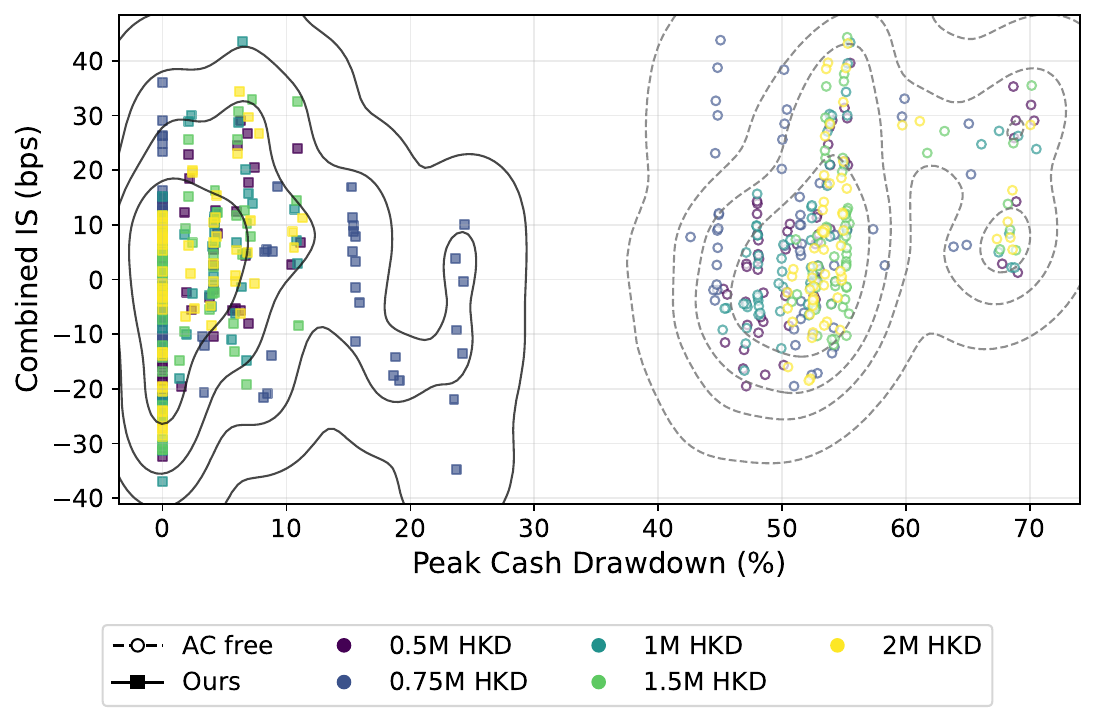}
    \caption{
    Joint distributions of peak cash drawdown and combined IS for Ours and AC free. Solid and dashed contours show KDEs for Ours and AC free, respectively, and markers indicate simulation outcomes. Results are aggregated over five execution volumes, six ordered asset pairs, and ten runs per setting (300 observations per method). Across all paired observations, the difference in combined IS (Ours $-$ AC free) is $-6.68$ bps on average (standard deviation: $12.95$).
    }
    \label{Fig:kde_cash_drawdown_combined_is}
    \Description{Joint distributions of peak cash drawdown and combined IS for Our method and AC free.}
\end{figure}

Figure~\ref{Fig:kde_cash_drawdown_combined_is} further demonstrates that Ours remains effective across different execution scales. The distribution of Ours is clearly shifted toward lower peak cash drawdown than that of AC free, and the two distributions are largely separated along the cash-drawdown dimension. This indicates that the reduction in intermediate cash requirements is robust across different execution volumes and asset pairs. In contrast, the combined-IS distributions partially overlap, reflecting variability across market conditions and simulation runs. Nevertheless, the proposed method achieves a lower combined IS on average while substantially reducing peak cash drawdown. These results suggest that explicitly incorporating cash constraints not only improves cash feasibility but also enhances execution performance over a wide range of execution conditions.

One possible interpretation is that the cash constraint acts as an implicit regularizer on the execution schedule. By preventing excessively cash-intensive schedules, it may also suppress aggressive trades that generate large market impact under the simulated market dynamics. Although the present results do not directly identify this mechanism, they show that cash feasibility and execution efficiency need not constitute a strict trade-off. Overall, these findings answer \textbf{RQ2}: explicitly incorporating cash constraints substantially reduces intermediate cash consumption while preserving, and on average improving, execution efficiency.

\section{Discussion}

\paragraph{\textbf{Execution Feasibility as a New Perspective on OE}} The classical OE literature~\citep{ac,multi_asset_oe_cross_impact1,multi_asset_oe_covariance1} focuses on minimizing IS while balancing market impact and execution risk, implicitly assuming sufficient capital throughout execution. Our results show that this assumption can be problematic in portfolio rebalancing with simultaneous buy and sell orders, where temporary cash shortages may arise even when the target portfolio is financially feasible.

The proposed formulation extends the conventional notion of OE by explicitly incorporating cash feasibility in expectation, constraining the expected cumulative cash consumption at every period of the execution horizon. Rather than solely minimizing execution costs, the optimization simultaneously determines an execution schedule whose expected capital deployment remains within budget throughout the trading horizon. The experimental results demonstrate that this additional consideration substantially reduces intermediate funding requirements while preserving execution efficiency. These findings suggest that execution feasibility should be regarded as an important optimization objective alongside transaction cost and risk in practical multi-asset execution.

\paragraph{\textbf{Implications for Practical Portfolio Execution}} An important property of the proposed formulation is that explicit cash constraints can be incorporated without sacrificing computational tractability. Although intertemporal cash constraints introduce temporal dependencies across assets through cumulative cash flows, the resulting optimization remains a convex QCQP under the standard linear market impact assumptions. Consequently, globally optimal schedules can be obtained using existing convex solvers without heuristic or black-box methods.

This analytical formulation also offers practical advantages over purely data-driven approaches~\citep{oe_w_cash_constraint1}. Since the cash constraints are modeled explicitly, practitioners can directly specify available capital limits according to operational requirements while preserving the interpretability of the resulting execution schedules. Such transparency is particularly desirable in institutional portfolio rebalancing, where execution decisions must often satisfy predefined funding and risk management policies.

\paragraph{\textbf{Limitations and Future Work}}

Several limitations remain to be addressed. First, the proposed framework adopts the classical linear AC market impact model. Although this assumption enables a convex analytical formulation, real financial markets exhibit nonlinear impact~\citep{nonlinear_impact1,nonlinear_impact2,nonlinear_impact3}. Nevertheless, we view the proposed framework as a theoretical benchmark. By providing the first convex formulation of multi-asset execution with intertemporal cash constraints, it establishes a reference point for future studies that extend the framework to more realistic market impact models. Second, the current formulation constrains the expected cumulative cash process rather than realized cash trajectories. Under stochastic price fluctuations, the realized cash consumption may temporarily exceed the prescribed budget even when the expectation satisfies the constraint. Incorporating probabilistic or robust cash constraints represents another promising extension. Finally, our empirical evaluation focuses on two-asset portfolio rebalancing tasks using parameters calibrated from an agent-based market simulator. Although the optimization framework is applicable to an arbitrary number of assets, further validation on larger portfolios would provide stronger evidence of its practical effectiveness. 

\section{Conclusion}
This paper introduced a cash-constrained formulation of multi-asset OE by extending the classical AC framework with intertemporal constraints on expected cash consumption. We showed that the resulting problem can be reformulated as a convex QCQP, allowing globally optimal schedules to be computed efficiently while constraining expected capital deployment throughout execution. Synthetic experiments showed that the proposed constraints fundamentally reshape execution schedules by coordinating cash inflows and outflows across assets. Furthermore, out-of-sample evaluations using an agent-based market simulator showed that the proposed method substantially reduces peak cash drawdown while maintaining comparable IS under various market conditions. Our results suggest that execution feasibility should be regarded as a fundamental objective alongside execution cost and risk. We hope the proposed formulation serves as a benchmark for future studies on feasibility-aware OE, including richer market impact models, stochastic cash constraints, and learning-based execution.

\bibliographystyle{ACM-Reference-Format}
\bibliography{bib} 


\end{document}